\newtheorem{proposition}{Proposition}
\newcommand{\Tr}{\mathop{\mathrm{Tr}}}
\begin{document}

\title{On reconstruction of states from evolution induced by quantum dynamical semigroups perturbed by covariant measures}

\author{\firstname{Grigori}~\surname{Amosov}}
\email[E-mail: ]{gramos@mi-ras.ru}

\author{\firstname{Egor}~\surname{Baitenov}}

\author{\firstname{Alexander}~\surname{Pechen}}

\address{Steklov Mathematical Institute of Russian Academy of Sciences, Gubkina str. 8, Moscow 119991, Russia}

\received{} 

\begin{abstract} 
In this work, we show the ability to restore states of quantum systems from evolution induced by quantum dynamical semigroups perturbed by covariant measures. Our procedure describes reconstruction of quantum states transmitted via quantum channels and as a particular example can be applied to reconstruction of photonic  states transmitted via optical fibers. For this, the concept of perturbation by covariant operator-valued measure in a Banach space is introduced and integral representation of the perturbed semigroup is explicitly constructed. Various physically meaningful examples are provided. In particular, a model of the perturbed dynamics in the symmetric (boson) Fock space is developed as covariant measure for a semiflow of shifts and its perturbation in the symmetric Fock space, and its properties are investigated. Another example may correspond to the Koopman-von Neumann description of a classical oscillator with bounded phase space.
\end{abstract}

\keywords{} 

\maketitle

\section{Introduction}

Quantum dynamical semigroups describing open system dynamics play a crucial role to understand the structure of evolution of quantum systems over time. Bounded generators of dynamical semigroups have the Gorini-Kossakowski-Sudarshan-Lindblad (GKSL) form~\cite{GKS1976,Lindblad1976}. This form, known also as Linbladian or standard, can be extended to unbounded generators~\cite{Cheb, Hol}. Quantum master equations with unbounded dissipative generators were treated in works~\cite{Davies1976,Davies1977} and later studied in~\cite{Fannes1976,DVV1977,Pule1980} for bosonic systems described by completely positive maps on CCR-algebras. Generator of the master equation for a damped harmonic oscillator was derived based on the quantum Langevin equation~\cite{Accardi1990}. The analysis of uniqueness and trace-preserving property of the minimal quantum dynamical semigroups were studied~\cite{ChebotarevFagnola1993,AJP2006}. Criteria for the existence of stationary states for quantum dynamical semigroups with generally unbounded generators illustrated by quantum optics physical examples were found~\cite{FagnolaRebolledo2001}. Analysis of singular perturbations of positive and substochastic semigroups on the normal states  and abstract spaces of states was conducted~\cite{M-K-IDAQP2008,ALM-K-2011}. 
Singular perturbation of quantum stochastic differential equation for a quantum system and quantum oscillator described by unbounded operators and interacting with an environment was rigorously studied~\cite{GoughJSP2007}. Existence of uniquely determined minimal trace-preserving strongly continuous dynamical semigroups on the space of density matrices for a dynamical semigroup with unbounded repeated perturbation of an open system was proven~\cite{TZ-JMP2016}. Master equations with unbounded generators appear in the analysis of control of a quantum-mechanical oscillator~\cite{PBF2022}. Gaussian solutions for GKSL-type equations with multi-modal generators which are quadratic in bosonic or fermionic creation and annihilation operators are discussed~\cite{TeretenkovMN2016,TeretenkovIDAQP2019}.

There are examples of quantum dynamical semigroups having a physical meaning but going beyond the standard form~\cite{Hol2, Hol3, Werner}. In \cite{Hol4} it was suggested to consider a perturbation of quantum dynamical semigroup determined by the operator-valued measure on half-axis covariant with respect to the action of semigroup. Following this idea, we introduce such perturbations both at the level of Hilbert \cite {Amo, Amo2} and Banach \cite{Amo3} spaces. In this framework we suppose that all covariant measures can be divided into two classes.  The first class includes measures that are absolutely continuous with respect to the Lebesgue measure on the half-axis. At the same time, the cases when the density consists of bounded and unbounded operators are considered separately. The second class includes measures singular with respect to the Lebesgue measure. In this situation, the domain of the generator changes with perturbation. Note that our definition of a perturbation changing the domain of the generator was partly inspired by the work \cite{BY}. Note that momentum-like operators defined on half-axis do appear also in completely classical mechanics using Koopman-von Neumann formalism for classical mechanical particles with a bounded phase space~\cite{McCaulPRER2019}.

$C_0$-semigroups acting on the Banach space of nuclear operators $\mathfrak {T}(H)$ in a Hilbert space $H$ have the clearest physical meaning. With the requirement to preserve the positive cone, they determine the dynamics of the states of the quantum system in the Schr\"oedinger picture. In this case, conjugate semigroups acting on the algebra of all bounded operators $B(H)$ in $H$ determine the dynamics in the Heisenberg picture. Then covariant operator-valued measures should also be subject to the requirement of preserving the positive cone. The measure in this case defines a completely positive instrument for measuring a certain moment of achievement \cite {Hol4}.  Perturbation by such a measure sometimes allows to restore the conservativeness of the system \cite{Amo4}.

Along these lines, we introduce the general concept of perturbation by covariant operator-valued measure in a Banach space. We explicitly construct integral representation of the perturbed semigroup and study various physically meaningful examples. As a practical result, our procedure describes reconstruction of quantum states transmitted via quantum channels. As a particular example, it can be applied to reconstruction of photonic  states transmitted via optical fibers.  For this example, a model of the perturbed dynamics in the symmetric (boson) Fock space is explicitly constructed as covariant measure for a semiflow of shifts and its perturbation in the symmetric Fock space, and its properties are investigated. Another example of the considered dynamics may correspond to the Koopman-von Neumann description of a classical oscillator with bounded phase space.

In~\cite{Werner} the concept of no-event dynamical semigroup was introduced. Such a map transforms pure states into pure ones. We show that the dynamics described by the GKSL generator can be considered as a perturbation of a no-event semigroup. Moreover the class of such perturbations associated with operator-valued measures is wider than the GKSL dynamics. Every such perturbation gives rise to a new semigroup such that the initial and perturbed dynamics are connected by an integral equation over some operator measure. Thus, we can restore the perturbed state by means of the initial state presented by the no-event semigroup.

The structure of this paper is the following. In Sec.~\ref{Sec:2}, we introduce the general concept of perturbation by covariant operator-valued measure in a Banach space and provide various examples. In Sec.~\ref{Sec:3}, we show that GKSL generators can be considered as perturbations of so called no-event semigroups. In Sec.~\ref{Sec:4}, the bosonic case is investigated in details, for which a model of the perturbed dynamics in the symmetric (bosonic) Fock space is explicitly constructed as covariant measure for a semiflow of shifts and its perturbation in this Fock space. Conclusions Sec.~\ref{Sec:Conclusions} summarizes the results. 

Throughout the paper we shall use the general statements from the theory of $C_0$-semigroups \cite{EN, Y}.

\section{Covariant measures in a Banach space}\label{Sec:2}

Let $X$ be a Banach space and ${\mathcal T}_t:X\to X,\ t\in {\mathbb R}_+$ be a semigroup in $X$, i.e.
\begin{align*}
{\mathcal T}_{t+s}=&{\mathcal T}_t{\mathcal T}_s,\quad t,s\in {\mathbb R}_+,\\
{\mathcal T}_0=&{\rm I}.
\end{align*}
If orbits of ${\mathcal T}$ are continuous in $t$ with respect to some topology $\tau$, then ${\mathcal T}_t=\exp(t{\mathcal L})$, where ${\mathcal L}$ is a generator of ${\mathcal T}=\{{\mathcal T}_t,\ t\in {\mathbb R}_+\}$ with the domain ${\rm dom}(\mathcal {L})$ dense in $X$ in the same topology $\tau$. If a strong operator topology is taken as $\tau$, then ${\mathcal T}$ is said to be a $C_0$-semigroup. Also important is the case when $X=(X_*)^*$ for some Banach space $X_*$ and $\tau $ is $w^*$-topology.

{\bf Definition 1.} {\it Suppose that ${\mathfrak M}(B):{\rm dom}({\mathcal K})\to X$ is a finitely additive function defined on the $\sigma $-algebra $\mathfrak B$ of measurable subsets of ${\mathbb R}_+$ such that
$$
\mathfrak {M}(B_1\cup B_2)=\mathfrak {M}(B_1)+\mathfrak {M}(B_2),\ 
$$
for any disjoint $B_1,B_2\in \mathfrak {B}$.
Then, $\mathfrak {M}=\{\mathfrak {M}(B),\ B\in \mathfrak {B}\}$ is said to be an (unbounded) operator-valued measure on $\mathbb {R}_+$. If
\begin{equation}\label{covariant}
{\mathcal T}_t{\mathfrak M}(B)=\mathfrak {M}(B+t),\quad B\in {\mathfrak B},\quad t\in \mathbb {R}_+
\end{equation}
holds true, then the measure $\mathfrak M$ is said to be covariant with respect to $\mathcal T$.
}

{\bf Definition 2.} {\it The covariant measure $\mathfrak {M}$ is said to be absolutely continuous with respect to the Lebesgue measure on ${\mathbb R}_+$ if there exists the density $P(t):{\rm dom}(\mathcal {L})\to X$ such that
$$
{\mathfrak M}(dt)=P(t)dt.
$$ 
In the opposite case, $\mathfrak {M}$ is said to be singular with respect to the Lebesgue measure.}

{\bf Example 1.}  {\it Suppose that $M$ is a bounded operator in $X$. It follows that
$$
{\mathfrak M}([t,s))=\int \limits _t^s{\mathcal T}_rMdr,\quad 0\le t\le s,
$$
is a covariant measure possessing the bounded density
$$
P(t)={\mathcal T}_tM,\quad t\in {\mathbb R}_+.
$$
}

{\bf Example 2.} {\it Suppose that $M$ is a bounded operator in $X$. It follows that
$$
{\mathfrak M}([t,s))=({\mathcal T}_t-{\mathcal T}_s)M,\quad 0\le t\le s,
$$
is a covariant measure possessing the (unbounded in general) density
$$
P(t)={\mathcal T}_t{\mathcal L}M,\ t\in {\mathbb R}_+.
$$
}

{\bf Example 3.} {\it Suppose that $X=L^2({\mathbb R}_+)$ and $\mathcal {T}=S$ is the semigroup of right shifts acting on any $\eta \in L^2({\mathbb R}_+)$ by the formula
\begin{equation}\label{shift}
		(S_t\eta)(x)=
		\begin{cases}
			\eta(x-t),& x>t,\\
			0,& 0\leq x\leq t,
		\end{cases}
	\end{equation}
Denote $\chi _{[t,s)}\in L^2({\mathbb R}_+)$ the characteristic function of the interval $[t,s)$ and fix $e\in L^2({\mathbb R}_+)$. Then the formula
$$
{\mathfrak M}([t,s))\eta =\braket {e,\eta}\chi _{[t,s)},\ \eta \in L^2({\mathbb R}_+),\quad 0\le t\le s,
$$
determines a covariant measure for $S$ that is singular with respect to the Lebesgue measure.
}

Suppose that $X=(X_*)^*$ for some Banach space $X_*$ and $\mathcal T$ is a semigroup in $X$ with $w^*$-continuous orbits. Then one can define a preadjoint $C_0$-semigroup ${\mathcal T}_*$ acting on a Banach space $X_*$ by the formula
$$
x({\mathcal T}_{*t}(\omega ))={\mathcal T}_t(x)(\omega),\ x\in X,\quad \omega \in X_*,\quad t\in {\mathbb R}_+.
$$
Note that the continuities of orbits of a semigroup in the strong and weak topologies are equivalent \cite{Y}.
If $\mathfrak {M}$ is a covariant measure for $\mathcal T$, then there exists the preadjoint covariant measure ${\mathfrak M}_*$ for ${\mathcal T}_*$. The condition (\ref {covariant}) results in
$$
{\mathfrak M}_*(B){\mathcal T}_{*t}={\mathfrak M}_*(B+t),\quad B\in \mathfrak {B},\quad t\in {\mathbb R}_+.
$$

\section{Lindbladians as perturbations of no-event semigroups}\label{Sec:3}

Now let $X={\mathfrak T}(H)$ be the Banach space of all nuclear operators in a separable Hilbert space $H$. 
Suppose that $K$ is a maximal dissipative operator with the domain ${\rm dom}(K)$ which is dense in $H$, then $T_t=e^{tK}$ is a $C_0$-semigroup of contractions in $H$.
The $C_0$-semigroup ${\mathcal T}_{*t}:{\mathfrak T}(H)\to {\mathfrak T}(H)$ is said to be {\it no-event} if it is defined by the formula \cite{Werner}
$$
{\mathcal T}_{*t}(\omega )=T_t^*\omega T_t,\quad \omega \in {\mathfrak T}(H),\quad t\in {\mathbb R}_+.
$$
The meaning of the name is that such a semigroup translates rank-one operators into rank-one operators. The domain of the generator $\mathcal {L}_*$ of the semigroup ${\mathcal T}_{*t}=e^{t\mathcal {L}_*}$ is
$$
{\rm dom}({\mathcal L}_*)\ni \{\ket {\psi}\bra {\phi },\ \psi ,\phi \in {\rm dom}(K^*)\}.
$$ 

Suppose that $(L_j)$ are some operators in $H$ whose domain includes ${\rm dom}(K^*)$. Define map $\Lambda $ by the formula
\begin{equation}\label{Lambda}
\Lambda (\ket {\psi }\bra {\phi })=\sum \limits _j\ket {L_j\psi }\bra {L_j\phi },\ \psi ,\phi \in {\rm dom}(K^*).
\end{equation}
Suppose that the condition
$$
\sum \limits _j||L_j\psi ||^2\le -Re(\psi ,K\psi),\quad \psi \in {\rm dom}(K^*),
$$
is satisfied. Then, $\Lambda (\ket {\psi }\bra {\chi })\in {\rm dom}(\mathcal {L}_*)$ and the measure
$$
\mathfrak {M}_*([t,s))=\int \limits _t^s\Lambda {\mathcal T}_{*r}dr
$$
is correctly defined and satisfies the condition
$$
\mathfrak {M}_*([t,s)){\mathcal T}_{*r}=\mathfrak {M}_*([t+r,s+r)),\quad 0\le t<s,\quad r\in {\mathbb R}_+.
$$
The adjoint operator-valued measure $\mathfrak {M}$ acting on the algebra of all bounded operators $B(H)=({\mathcal T}(H))^*$ possesses the property (\ref {covariant}) with respect to the semigroup
$$
{\mathcal T}_t(x)=T_txT_t^*,\quad x\in B(H),\quad t\in {\mathbb R}_+.
$$

Let us define the Lindbladian $\breve {\mathcal L}$ by the formula
$$
{\breve {\mathcal L}}(x)=Kx+xK^*+2\sum \limits _jL_j^*xL_j.
$$
Then, there exists a minimal solution to the GKSL equation of the form \cite{Hol4}
\begin{equation}\label{Lind}
\frac {d}{dt}{\breve {\mathcal T}}_t(x)={\mathcal L}({\breve {\mathcal T}}_t(x)),\quad t\in {\mathbb R}_+.
\end{equation}
Equation (\ref {Lind}) can be represented in the form of integral equation
\begin{equation}\label{equ}
\breve {\mathcal T}_t-\int \limits _0^t\mathfrak {M}(ds)\breve {\mathcal T}_{t-s}=\mathcal {T}_{t},\quad t\in {\mathbb R}_+.
\end{equation}

The following two examples were discussed in~\cite{Bhat}.

{\bf Example 4.} {\it Let $H=L^2({\mathbb R}_+)$, $K=\frac {d^2}{dx^2},\ {\rm dom}(K)=\{\psi\ :\ \psi ''\in H,\ \psi (0)=0\}$
and $T_t=e^{tK}$ be a $C_0$-semigroup of contractions describing the diffusion with extinction at the point $x=0$ with the self-adjoint generator $K^*=K<0$. The semigroup $T=\{T_t,\ t\in {\mathbb R}_+\}$ has an integral representation in the form
$$
(T_t\eta)(x)=\frac {1}{\sqrt {4\pi t}}\int \limits _0^{+\infty }\left (\exp \left (-\frac {(x-y)^2}{4t}\right )-\exp \left (-\frac {(x+y)^2}{4t}\right )\right )\eta (y)dy,\ \eta \in H,\ t\in {\mathbb R}_+.
$$
Let $L=-\frac {d}{dx},\ {\rm dom}(L)=\{\psi \ :\ \psi '\in H,\ \psi (0)=0\}$, and define $\Lambda $ by (\ref {Lambda}) and the measure
\begin{equation}\label{mera2}
{\mathfrak M}_*([t,s))(\ket {\psi }\bra {\phi })=\int \limits _t^s\Lambda {\mathcal T}_{*r}dr.
\end{equation}
Then, the solution to (\ref {equ}) describes the quantum diffusion with extinction.
}

The following example shows that not all covariant measures associated with no-event semigroups are generated by operators $(L_j)$.

{\bf Example 5.} {\it Put $H=L^2({\mathbb R}_+)$, $K=-\frac {d}{dx},\ {\rm dom}(K)=\{\psi\ |\ \psi '\in H,\ \psi (0)=0\}$
and $T_t=e^{tK}$ is the semigroup of right shifts (\ref {shift}). Note that $K^*=\frac {d}{dx},\ {\rm dom}(K^*)=\{\psi\ |\ \psi '\in H\}$. Fix $\omega _0\in {\mathfrak T}(H)$ and put
$$
\Lambda (\ket {\psi }\bra {\phi })=\psi (0)\overline \phi (0)\omega _0,\ \psi ,\phi \in {\rm dom}(K^*).
$$
It defines the absolutely continuous measure in the following way
\begin{equation}\label{mera}
{\mathfrak M}_*([t,s))(\ket {\psi }\bra {\phi })=\int \limits _t^s\Lambda {\mathcal T}_{*r}(\ket {\psi }\bra {\phi })dr=
\int \limits _t^s\psi (r)\overline {\phi }(r)dr\cdot \omega _0,\quad \psi ,\phi \in {\rm dom}(K^*).
\end{equation}
It implies that for the semigroup ${\mathcal T}_{t}(\cdot )=T_t\cdot T_t^*$ acting on $B(H)$ we obtain
$$
{\mathcal T}_t{\mathfrak M}(B)={\mathfrak M}(B+t),\quad B\in \mathfrak {B},\quad t\in {\mathbb R}_+.
$$
} 
While the generator of evolution $K=-\frac {d}{dx}$ is less natural for quantum mechanics than $K=-\frac {d^2}{dx^2}$, such generator naturally appears as a Koopman operator in the Koopman-von Neumann description of a classical oscillator with bounded phase space~\cite{McCaulPRER2019}. In the Koopman-von Neumann formalism, evolution of the probability distribution $\rho$ of a classical mechanical system with phase space $\cal P$ is represented via evolution of a vector $\psi$ in the Hilbert space $H=L^2({\cal P},d\mu)$ as $\rho=|\psi|^2$, where vector $\psi$ satisfies the Schr\"odinger like equation $\dot\psi =i K\psi$ with Koopman operator $K$. For an oscillator (with unit frequency $\omega=1$) in the action-agnle representation $(J,\theta)$, the Koopman operator formally has the form $K=-i\frac{\partial}{\partial\theta}$. Its domain is determined by the conditions for the system on the boundary of the phase space~\cite{McCaulPRER2019}.

\section{Covariant measures in the symmetric Fock space}\label{Sec:4}

For a detailed discussion of the notion of a symmetric (bosonic) Fock space $\Gamma_s(\mathcal {H})$ over one-particle Hilbert space $\mathcal H$, we refer the reader to the well-known monograph \cite{Partha}. An important for practical applications example of a bosonic Fock space describes photons in quantum optics~\cite{ScullyZubairyBook}.  In this case, vectors of $\Gamma_s(\mathcal {H})$ can be interpreted as multiphoton states and in the particular case of $\mathcal {H}=L^2(\mathbb R_+)$ the dynamics of the system can be considered as transmission of light through optical fibers. As it was mention in Section III we can look at the GKSL equation like on a perturbation of the no-event semigroup by some measure. The GKSL equation can be embedded into the symmetric Fock space, where it becomes a quantum stochastic differential equation. In this section we give the example in which our measure is singular such that deriving the corresponding quantum stochastic differential equation seems to be impossible. We will not touch here on the problem of unbounded observables and measures, which is certainly of separate interest \cite{Amo5}.

Let $H=\Gamma_s(L^2(\mathbb R_+))$ and set $K$ to be the differential second quantisation of  the operator $-\frac d{dx}$ with domain $\{\psi\ :\ \psi '\in L^2(\mathbb R_+),\ \psi (0)=0\}$. Then 
the semigroup $
T_t = e^{tK}
$
acts on the exponential vectors as
\begin{equation}\label{eq:T*te(f)}
    T_{t} e(f) = e(S_tf),\ f\in L^2(\mathbb R_+), 
\end{equation}
provided that $S$ denotes the semigroup of right shifts (\ref{shift}). The semigroup $\mathcal T_{*t}$ describes the left shifts in the Fock space along semi-axis with extinction. Physically, the particles moving left through the point $x=0$ are assumed to be destructed that is projected onto vacuum state. 

Maps $\mathcal T_{*t}$ and $\mathcal T_t$ act on rank one operators made of exponential vectors as
\begin{align*}
    \mathcal T_{*t}\ket{e(f)}\bra{e(g)} =&T_t^*\ket{e(f)}\bra{e(g)}T_t= \ket{e(S_t^*f)}\bra{e(S_t^*g)},\\ 
    \mathcal T_{t}\ket{e(f)}\bra{e(g)} = &T_t\ket{e(f)}\bra{e(g)}T_t^*=\ket{e(S_tf)}\bra{e(S_tg)},
\end{align*}
$f,g\in f\in L^2(\mathbb R_+)$.
The adjoint semigroup $\mathcal T$ maps any operator $x$ from $B(H)$ onto tensor product of the projector $\ket\Omega\bra\Omega$ in the space $\Gamma_s(L^2(0,t))$ and the copy of $x$ in the space $\Gamma_s(L^2(t,+\infty)\cong H$, where $H$ is naturally identified with $\Gamma_s(L^2(0,t))\otimes\Gamma_s(L^2(t,+\infty))$.

Consider the other semigroup $\breve{\mathcal T}$ in $B(H)$ whose action on $x$ differs from $\mathcal T$ by the tensor multiplier in $\Gamma_s(L^2(0,t))$ that is $\rm I$ instead of $\ket\Omega\bra\Omega$. In particular,
\begin{equation*}
    \breve{\mathcal T}_{t} \ket{e(f)}\bra{e(g)} = {\rm I}\otimes \ket{e(S_tf)}\bra{e(S_tg)}, 
\end{equation*}
where the latter tensor factorisation is with respect to $\Gamma_s(L^2(0,t))\otimes\Gamma_s(L^2(t,+\infty) $.
\begin{proposition}
The preadjoint semigroup  $\breve{\mathcal T}_{*t}$
acts on exponential rank one operators as
\begin{equation*}
    \breve{\mathcal T}_{*t} \ket{e(f)}\bra{e(g)} =\exp\left(\int_0^t\overline{g(x)}f(x)\, dx\right) \ket{e(S_t^*f)}\bra{e(S_t^*g)},\ f,g\in L^2(\mathbb R_+)
\end{equation*}
\begin{proof}
We have for $f,g,h_1,h_2\in L^2(\mathbb R_+)$ that
    \begin{align*}
        \left\langle e(h_1),\left(\breve{\mathcal T}_{*t} \ket{e(f)}\bra{e(g)}\right)e(h_2)\right\rangle=&\left\langle e(g),\left(\breve{\mathcal T}_{t}\ket{e(h_2)}\bra{e(h_1)}\right) {e(f)}\right\rangle \\ 
        =&\Braket{e(g),\left(I\otimes \ket{e(S_th_2)}\bra{e(S_th_1)}\right) {e(f)}}\\
        =&\exp\left(\int_0^t\overline{g(x)}f(x)\, dx +\braket{g,S_th_2}+\braket{S_th_1,f} \right) \\
        =&\left\langle e(h_1),\left[ \exp\left(\int_0^t\overline{g(x)}f(x)\,dx\right)\ket{e(S_t^*f)}\bra{e(S_t^*g)} \right]e(h_2)\right\rangle ,       
    \end{align*}         
\end{proof}

\end{proposition}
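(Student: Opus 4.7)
The plan is to establish the formula by computing matrix elements of $\breve{\mathcal{T}}_{*t}\ket{e(f)}\bra{e(g)}$ against exponential vectors $e(h_1),e(h_2)$. Since exponential vectors are total in the Fock space, matching these matrix elements for all $h_1,h_2\in L^2(\mathbb{R}_+)$ suffices to identify the operator. The natural way to access $\breve{\mathcal{T}}_{*t}$ is through its trace-duality with $\breve{\mathcal{T}}_t$, so I would begin by rewriting
\[
\Braket{e(h_1),\,\breve{\mathcal{T}}_{*t}(\ket{e(f)}\bra{e(g)})\,e(h_2)}=\Braket{e(g),\,\breve{\mathcal{T}}_t(\ket{e(h_2)}\bra{e(h_1)})\,e(f)},
\]
reducing the problem to a computation involving the already-known action of $\breve{\mathcal{T}}_t$.

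Next I would invoke the defining formula $\breve{\mathcal{T}}_t\ket{e(h_2)}\bra{e(h_1)} = \mathrm{I}\otimes\ket{e(S_th_2)}\bra{e(S_th_1)}$ relative to the decomposition $H \cong \Gamma_s(L^2(0,t))\otimes\Gamma_s(L^2(t,+\infty))$, and factorise the exponential vectors $e(f)$ and $e(g)$ accordingly as $e(f_{[0,t]})\otimes e(f_{[t,\infty)})$ and similarly for $g$. The first tensor factor of the resulting inner product gives $\braket{e(g_{[0,t]}),e(f_{[0,t]})} = \exp\bigl(\int_0^t\overline{g(x)}f(x)\,dx\bigr)$, which is precisely the scalar prefactor appearing in the claim. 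The second tensor factor reduces, via the exponential-vector identity $\braket{e(u),e(v)}=\exp\braket{u,v}$, to $\exp\braket{g,S_th_2}\exp\braket{S_th_1,f}$, where one uses the fact that $S_th_i$ is supported in $[t,+\infty)$ so that inner products against the full $f,g$ and against their restrictions to $[t,+\infty)$ coincide.

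Finally, I would compare this expression with the matrix element of the claimed right-hand side, namely
\[
\exp\!\left(\int_0^t\overline{g(x)}f(x)\,dx\right)\Braket{e(h_1),e(S_t^*f)}\Braket{e(S_t^*g),e(h_2)},
\]
and use $\braket{h_1,S_t^*f}=\braket{S_th_1,f}$ together with the analogous identity for the other factor to see equality. Density/totality of exponential vectors then upgrades equality of matrix elements to equality of operators. The only delicate point is the bookkeeping in step two: one must verify that the tensor-product factorisation of $e(f)$ and the condition $\mathrm{supp}(S_th_i)\subseteq[t,+\infty)$ interact correctly so that the $[0,t]$-part of $f$ and $g$ contributes only through the $\mathrm{I}$ factor. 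Once that is done the rest is routine manipulation of exponential-vector inner products.
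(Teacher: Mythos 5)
Your proposal is correct and follows essentially the same route as the paper: dualise to $\breve{\mathcal T}_t$ via the trace pairing, apply the defining formula $\breve{\mathcal T}_t\ket{e(h_2)}\bra{e(h_1)}=\mathrm{I}\otimes\ket{e(S_th_2)}\bra{e(S_th_1)}$, factorise the exponential vectors across $\Gamma_s(L^2(0,t))\otimes\Gamma_s(L^2(t,+\infty))$, and match matrix elements using $\braket{S_th_1,f}=\braket{h_1,S_t^*f}$. The paper's proof is exactly this computation, written more compactly.
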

Physically, preadjoint semigroup $\breve{\mathcal T}_{*t}$ corresponds to left shifts with forgetting, i.e. the particles moving left through the point $x=0$ are considered as passing from the system to the environment. To show this, note that $\breve{\mathcal T}_{*t}$ takes the partial trace of the density matrix over $(0,t)$.

Notice that $\mathcal T_{*t}$ is no-event semigroup while $ \breve{\mathcal T}_{*t} $ is conservative.

Now we construct an operator-valued measure $\mathfrak M$ on $\mathbb R_+$ covariant with respect to $\mathcal T$ that perturbs $\mathcal T$ to $\breve {\mathcal T}$ via equation~\eqref{equ}. 
The same problem in case of anti-symmetric Fock space was considered in~\cite{Amo4}. In fact, we will map $H$ onto another Hilbert space $H^\wedge$ via some isometric isomorphism $W$ and deal with semigroups in $H^\wedge$.

Denote $\chi ([t,s))$ the operator of multiplication on characteristic function of $[t,s)$ in $L^2(\mathbb R_+)$. Then $\chi(dt)$ is projector-valued measure. Define the operator $$W\colon H\to H^{\wedge}:= \{\mathbb C\Omega\}\oplus H\otimes L^2(\mathbb R_+) $$
initially by its action on the exponential vectors,
\begin{equation}\label{eq:Wdef}
    We(f) = \Omega\oplus\left(  \int_0^{+\infty}e(S_t^*f)\otimes \chi(dt) f\right), \quad  f \in L^2(\mathbb R_+).
\end{equation}
\begin{proposition}
    $W$ preserves inner product of exponential vectors. 
\end{proposition}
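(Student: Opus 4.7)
The plan is to compute $\langle We(f), We(g)\rangle_{H^\wedge}$ directly using the direct-sum and tensor-product structure of $H^\wedge$, and show it equals $\exp\langle f,g\rangle = \langle e(f), e(g)\rangle_H$. Splitting along the direct sum $\{\mathbb C\Omega\}\oplus H\otimes L^2(\mathbb R_+)$, the vacuum component contributes $\langle\Omega,\Omega\rangle = 1$, while the second component gives a double integral of $\langle e(S_t^*f)\otimes\chi(dt)f,\,e(S_s^*g)\otimes\chi(ds)g\rangle$ over $\mathbb R_+\times\mathbb R_+$.

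The first key step is to observe that because $\chi$ is a projector-valued measure (multiplication by characteristic functions), one has $\chi(dt)\chi(ds)=\chi(dt)\,\delta(t-s)$ in the distributional sense, so the double integral collapses to a single integral. More precisely, $\langle\chi(dt)f,\chi(ds)g\rangle_{L^2(\mathbb R_+)}$ is supported on the diagonal and equals $\overline{f(t)}g(t)\,dt$ there. Combined with $\langle e(S_t^*f),e(S_t^*g)\rangle_H = \exp\langle S_t^*f,S_t^*g\rangle = \exp\langle f, S_tS_t^*g\rangle$, and noting that $S_tS_t^*$ is multiplication by $\chi_{[t,+\infty)}$, this reduces the second-component inner product to
\begin{equation*}
\int_0^{+\infty}\exp\!\left(\int_t^{+\infty}\overline{f(x)}g(x)\,dx\right)\overline{f(t)}g(t)\,dt .
\end{equation*}

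The second step is the exact evaluation of this integral. Setting $\phi(t)=\int_t^{+\infty}\overline{f(x)}g(x)\,dx$, we have $\phi'(t)=-\overline{f(t)}g(t)$, $\phi(0)=\langle f,g\rangle$, and $\phi(+\infty)=0$, so the integrand is $-\frac{d}{dt}e^{\phi(t)}$ and the integral equals $e^{\phi(0)}-e^{\phi(+\infty)}=e^{\langle f,g\rangle}-1$. Adding back the vacuum contribution $1$ yields exactly $\exp\langle f,g\rangle$, which is the inner product of exponential vectors in $H$.

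The main subtlety, rather than an outright obstacle, is giving rigorous meaning to the integral $\int_0^{+\infty} e(S_t^*f)\otimes \chi(dt)f$ as an $H\otimes L^2(\mathbb R_+)$-valued integral with respect to the projector-valued measure $\chi$, and justifying the collapse of the double integral to a diagonal one. This can be handled by approximating with simple partitions $\{[t_k,t_{k+1})\}$ of $\mathbb R_+$, using the orthogonality $\chi([t_k,t_{k+1}))\chi([t_j,t_{j+1}))=0$ for $k\neq j$ to kill off-diagonal terms, and then passing to the limit (for instance first for compactly supported continuous $f,g$ and then extending by density and by the well-known continuity of $f\mapsto e(f)$ on bounded sets).
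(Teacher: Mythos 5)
Your proof is correct and follows essentially the same route as the paper: the vacuum summand contributes $1$, the orthogonality of the projection-valued measure $\chi$ collapses the cross terms to a diagonal integral of $e^{\braket{S_t^*f,S_t^*g}}\overline{f(t)}g(t)$, and recognizing the integrand as $-\frac{d}{dt}e^{\braket{S_t^*f,S_t^*g}}$ evaluates it to $e^{\braket{f,g}}-1$. Your added remarks on rigorously justifying the diagonal collapse via simple partitions only make explicit what the paper leaves implicit.
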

\begin{proof}
Indeed,
    \begin{align*}
        & \Braket{\Omega\oplus\left(  \int_0^{+\infty}e(S_t^*f)\otimes \chi(dt) f\right),\Omega\oplus\left(  \int_0^{+\infty}e(S_t^*g)\otimes \chi(dt) g\right)} \\
        &=1+\int_0^{+\infty} e^{\braket{S_t^*f, S_t^*g}}\overline{f(t)} g(t)\, dt\\
        &=1+\int_0^{+\infty} e^{\braket{S_t^*f, S_t^*g}}\left(-\frac d{dt}\braket{S_t^*f, S_t^*g}\right)\, dt = 1+e^{\braket{f,g}}-1 = \braket{e(f),e(g)}, \quad f,g\in L^2(\mathbb R_+).
    \end{align*}
\end{proof}
Hence the operator $W$ can be continued to an isometric operator on $H$.
%\begin{proposition}
%$W$ acts on tensor powers as
%    \begin{equation}\label{eq:Wfn}
%    W f^{\otimes n} = \sqrt n \int_0^{+\infty}(S_t^*f)^{\otimes(n-1)}\otimes \chi(dt) f, \quad  f \in L^2(\mathbb R_+), \quad n\ge 1.
%\end{equation}
%\end{proposition}
%\begin{proof}
%    Subsitute $\varepsilon f$ instead of $f$ in~\eqref{eq:Wdef} and compute the $n$-th derivative with respect to $\varepsilon$ at $\varepsilon=0$.
%\end{proof}
\begin{proposition}
All possible values of the expression $ J(g):=
     \int_0^{+\infty}e(S_t^*g)\otimes \chi(dt) g$ span a dense subspace in $H\otimes L^2(\mathbb R_+)$.
\end{proposition}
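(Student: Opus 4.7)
The plan is to prove density by showing that the orthogonal complement of the set of all $J(g)$ in $H\otimes L^2(\mathbb R_+)$ is trivial. Under the identification $H\otimes L^2(\mathbb R_+)\cong L^2(\mathbb R_+;H)$ the vector $J(g)$ becomes the $H$-valued function $t\mapsto g(t)\,e(S_t^*g)$, and the estimate
\begin{equation*}
\|J(g)\|^2=\int_0^\infty|g(t)|^2 e^{\|S_t^*g\|^2}\,dt\le\|g\|^2 e^{\|g\|^2}
\end{equation*}
ensures $J(g)\in H\otimes L^2(\mathbb R_+)$. I take $\xi\in L^2(\mathbb R_+;H)$ with $\braket{\xi,J(g)}=0$ for every $g\in L^2(\mathbb R_+)$ and aim to conclude $\xi=0$.

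The first step is to stratify by particle number. Substituting $g\mapsto\lambda g$ and expanding $e(S_t^*\lambda g)=\sum_n \lambda^n(S_t^*g)^{\otimes n}/\sqrt{n!}$, the vanishing of $\braket{\xi,J(\lambda g)}$ for all $\lambda\in\mathbb C$ forces the coefficient of each $\lambda^{n+1}$ to vanish, so for every $n\ge 0$ and every $g$,
\begin{equation*}
B_n(g):=\int_0^\infty g(t)\,\braket{\xi^{(n)}(t),(S_t^*g)^{\otimes n}}_{H^{(n)}}\,dt=0,
\end{equation*}
where $\xi^{(n)}(t)\in L^2(\mathbb R_+)^{\otimes_s n}$ denotes the $n$-particle component of $\xi(t)$. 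Writing the inner product as an $n$-fold integral and substituting $y_i=x_i+t$ recasts $B_n(g)$ as
\begin{equation*}
B_n(g)=\int_0^\infty dt\int_{(t,\infty)^n}dy_1\cdots dy_n\;g(t)g(y_1)\cdots g(y_n)\,\Phi_n(t;y_1,\ldots,y_n),
\end{equation*}
where $\Phi_n(t;y_1,\ldots,y_n):=\overline{\xi^{(n)}(t;y_1-t,\ldots,y_n-t)}\prod_{i=1}^n\mathbf 1_{y_i>t}$.

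The vanishing of the $(n+1)$-linear form $B_n$ on $L^2(\mathbb R_+)$ is equivalent to the vanishing of the total symmetrisation of $\Phi_n$ over its $n+1$ arguments; since $\Phi_n$ is already symmetric in $y_1,\ldots,y_n$, this reduces to
\begin{equation*}
\sum_{i=0}^n \Phi_n(t_i;t_0,\ldots,\widehat{t_i},\ldots,t_n)=0\ \text{ for a.e. }(t_0,\ldots,t_n)\in\mathbb R_+^{n+1}.
\end{equation*}
The crucial observation is that the product $\prod_{j\ne i}\mathbf 1_{t_j>t_i}$ inside the $i$-th summand is non-zero only when $t_i$ is \emph{strictly the smallest} of the $n+1$ coordinates; on the full-measure set of tuples with pairwise distinct coordinates exactly one term therefore survives, namely the one corresponding to $i^*:=\arg\min_j t_j$. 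It equals $\overline{\xi^{(n)}(t_{i^*};t_{j_1}-t_{i^*},\ldots,t_{j_n}-t_{i^*})}$, where $j_1,\ldots,j_n$ enumerate the remaining indices. As $(t_0,\ldots,t_n)$ ranges over $\mathbb R_+^{n+1}$ the tuple $(t_{i^*},t_{j_1}-t_{i^*},\ldots,t_{j_n}-t_{i^*})$ covers $\mathbb R_+\times\mathbb R_+^n$ with full measure, so $\xi^{(n)}(t;x_1,\ldots,x_n)=0$ a.e.; doing this for every $n$ gives $\xi=0$.

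The main obstacle I anticipate is recognising and exploiting this wedge-support structure. A blind polarisation of $B_n(g)=0$ produces an entangled symmetric $(n+1)$-linear identity whose individual terms do not decouple, so the mere surjectivity of $S_t^*$ is not enough to conclude; it is precisely the indicator $\prod\mathbf 1_{y_i>t}$, inherited from the fact that $S_t^*$ deletes the initial segment of $g$, that selects a single summand in the symmetrisation and lets one read off $\xi^{(n)}$ pointwise. A secondary but routine technicality is the interchange of the $\lambda$-series with the integral defining $B_n$, which is controlled by the same dominated-convergence bound used at the outset.
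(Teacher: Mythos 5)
Your proof is correct, and it takes a genuinely different route from the paper's. The paper argues constructively: it fixes a smooth bounded $f$ with $f(0)\neq 0$, $f'(0)\neq 0$, partitions $[b,c)$ into $n$ equal pieces, and shows that the Riemann-type combinations $\sum_k\bigl(J(S_{x_k}f)-J(\chi[x_{k+1},+\infty)S_{x_k}f)\bigr)$ converge in norm to $e(f)\otimes\chi[b,c)$ as $n\to\infty$, after which density follows from totality of such simple tensors. You instead show that the orthogonal complement of $\{J(g)\}$ is trivial: the substitution $g\mapsto\lambda g$ and the chaos (particle-number) expansion decouple the orthogonality relation into the identities $B_n(g)=0$; polarization converts each into the a.e.\ vanishing of the symmetrized kernel; and the wedge support $\prod_i\mathbf 1_{y_i>t}$ of $\Phi_n$, inherited from $(S_t^*g)(x)=g(x+t)$, ensures that on the full-measure set of tuples with pairwise distinct coordinates exactly one summand of the symmetrization survives, forcing $\xi^{(n)}=0$ a.e. Both arguments are sound; the steps you label routine (the Fubini interchange, controlled by $\|J(g)\|^2\le\|g\|^2e^{\|g\|^2}$, and the polarization, which is legitimate since $\Phi_n\in L^2(\mathbb R_+^{n+1})$ and elementary tensors are total there) are indeed routine. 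The paper's approach buys an explicit approximating sequence — it tells you which concrete combinations of the $J(g)$ to use — at the cost of regularity hypotheses on $f$ and some error bookkeeping; yours is estimate-light, works for arbitrary $g\in L^2(\mathbb R_+)$, and isolates exactly the structural reason for density, namely the support-shifting property of $S_t^*$ that selects a single term in the symmetrization.
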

\begin{proof}
    Fix an interval $[b,c)\subset \mathbb R_+$ and a smooth bounded function $f\in L^2(\mathbb R_+)$ such that $f(0)\neq 0$, $f'(0)\neq 0$, $t\in\mathbb R_+$. Take points $\{x_i\}_{i=0}^n$,  $b = x_0<x_1<\ldots<x_{n}=c$  dividing $[b,c]$ into $n$ equal parts. Then
    \begin{equation*}
        J(S_{x_k}f)-J(\chi[x_{k+1},+\infty)S_{x_k}f)=\int_{x_k}^{x_{k+1}}e(S_t^*S_{x_k}f)\otimes \chi(dt) S_{x_k}f.
    \end{equation*} 
    Notice that
    \begin{align*}
    &\left\|\int_{x_k}^{x_{k+1}}e(S_t^* S_{x_k}f)\otimes \chi(dt) S_{x_k}f- \int_{x_k}^{x_{k+1}}e(f)\otimes \chi(dt) S_{x_k}f \right\|\\
    &\le\max_{\tau\in[0,\frac{c-b}n]} \left\|e(S_{\tau}f) - e(f) \right  \|\cdot\max_{\tau\in[0,\frac{c-b}n]}|f(\tau)|\cdot\sqrt{\frac{c-b}n}\\
    &\le\max_{\tau\in[0,\frac{c-b}n]} \left\|e(S_{\tau}f) - e(f) \right  \|\cdot2|f(0)|\cdot\sqrt{\frac{c-b}n}
    \end{align*}
    provided that $n$ is large enough.
    Next,
    \begin{align*}
        \left\| \int_{x_k}^{x_{k+1}}e(f)\otimes \chi(dt) f - f(0)e(f)\otimes \chi[x_k,x_{k+1})\right\|&\le\|e(f)\|\cdot \max_{\tau\in[0,\frac{c-b}n]}|f(\tau)-f(0)|\cdot \sqrt{\frac{c-b}n}\\
        &\le \|e(f)\|\cdot 2|f'(0)|\cdot \left(\frac{c-b}n\right)^{3/2}
    \end{align*}
    provided that $n$ is large enough.
    
    Hence,
    \begin{align*}
       & \left\|\sum_{k=1}^{n-1}( J(S_{x_k}f)-J(\chi[x_{k+1},+\infty)S_{x_k}f)) - e(f)\otimes \chi[b,c)\right\|^2\le
        \\
       & \le n\cdot \left(
        \max_{\tau\in[0,\frac{c-b}n]} \left\|e(S_{\tau}f) - e(f) \right  \|\cdot2|f(0)|\cdot\sqrt{\frac{c-b}n}+\|e(f)\|\cdot 2|f'(0)|\cdot \left(\frac{c-b}n\right)^{3/2}
        \right)^2\to 0\text{ as } n\to \infty.
    \end{align*}
    
    Therefore $f\otimes \chi[b,c)$ belongs to the closure of  $\mathrm{Span}\left\{ J(g)\colon g\in L^2(\mathbb R_+)\right\}$ and thus this closure coincides with $H\otimes L^2(\mathbb R_+)$.
    \end{proof}
    
 Hence we obtain $W$ to be an isometric isomorphism. For an object (vector, operator or superoperator) $h\in H$ we denote $h^{\wedge}$ its isomorphic (with respect to $W$) object in $H^{\wedge}$. The inverse correspondence for an object $h'$ from $H^{\wedge}$ will be referred to as $h'^{\vee}$.

\begin{proposition}
    The semigroups $\mathcal T_{t}^\wedge$ and $\breve{\mathcal T}_t^\wedge$ isomorphic to the above ones act on $X\in B(H^\wedge)$ as
\begin{align*}
    \mathcal T_{t}^\wedge X &=  \Big(1\oplus(I\otimes S_t)\Big)X\Big(1\oplus(I\otimes S_t^*)\Big),\\
    \breve{\mathcal T}_{t}^\wedge X&=
    {\mathcal T}_{t}^\wedge X+0\oplus\left( \int_0^t\breve{\mathcal T}_{t-s}X^\vee\otimes \chi(ds)\right)  .
\end{align*}
\end{proposition}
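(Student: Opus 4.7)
My plan is to verify each formula by computing on exponential vectors via the isomorphism $W$. For the first formula, since $\mathcal{T}_t(\cdot) = T_t(\cdot)T_t^*$ transports under $W$ to $\mathcal{T}_t^\wedge X = (WT_tW^*)X(WT_tW^*)^*$, it suffices to show that $WT_tW^* = 1\oplus(I\otimes S_t)$. Applying both sides to $We(f)$, \eqref{eq:T*te(f)} and \eqref{eq:Wdef} yield
\[ WT_te(f) = We(S_tf) = \Omega\oplus\int_0^\infty e(S_s^*S_tf)\otimes\chi(ds)\,(S_tf). \]
Since $(S_tf)(s)=0$ for $s\leq t$ and $S_s^*S_tf = S_{s-t}^*f$ for $s\geq t$, substituting $r=s-t$ identifies the integral above as $(I\otimes S_t)$ applied to the second summand of $We(f)$, and the first identity follows on a dense set.

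For the second formula, the plan is to compute $(\breve{\mathcal{T}}_t^\wedge - \mathcal{T}_t^\wedge)X$ directly on rank-one operators $X^\vee = \ket{e(f)}\bra{e(g)}$, using from the discussion preceding the proposition that
\[ (\breve{\mathcal{T}}_t - \mathcal{T}_t)\ket{e(f)}\bra{e(g)} = ({\rm I}-\ket{\Omega}\bra{\Omega})_{\Gamma_s(L^2(0,t))}\otimes\ket{e(S_tf)}\bra{e(S_tg)}_{\Gamma_s(L^2(t,\infty))}. \]
I would then unpack matrix elements of $W(\cdot)W^*$ applied to this operator against $We(h_1)$ and $We(h_2)$. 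By construction, $We(h)$ decomposes as $\Omega$ plus a continuous superposition indexed by a ``last non-vacuum time'' $s\in\mathbb{R}_+$, so the vacuum-complement projector $({\rm I}-\ket{\Omega}\bra{\Omega})_{\Gamma_s(L^2(0,t))}$ on the past factor contributes precisely the $\chi(ds)$-density on $[0,t]$. After regrouping the future factor using the semigroup property of $\breve{\mathcal{T}}$, the result matches the claimed $0\oplus\int_0^t\breve{\mathcal{T}}_{t-s}(X^\vee)\otimes\chi(ds)$. Equivalently, one may argue via the integral equation \eqref{equ}: conjugation by $W$ reduces the task to identifying the perturbation measure as $W(\mathfrak{M}(ds)[Y])W^* = 0\oplus(Y\otimes\chi(ds))$, whose covariance is automatic from the first formula since $(I\otimes S_t)(Y\otimes\chi(B))(I\otimes S_t^*) = Y\otimes\chi(B+t)$.

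\textbf{The main obstacle} is the bookkeeping needed to reconcile two descriptions of the ``past subsystem'': the Fock-space tensor factor $\Gamma_s(L^2(0,t))$ versus the $L^2(\mathbb{R}_+)$-indexed family in the second summand of $H^\wedge$. The isomorphism $W$ is engineered precisely to translate between these, and the density result of the preceding proposition guarantees that, once the identification is verified on exponential matrix elements, it extends to all of $B(H^\wedge)$.
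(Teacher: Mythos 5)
Your treatment of the first formula is essentially the paper's: you verify $WT_tW^*=1\oplus(I\otimes S_t)$ on exponential vectors from \eqref{eq:T*te(f)} and \eqref{eq:Wdef}, using that $S_tf$ vanishes on $[0,t]$ and $S_s^*S_tf=S_{s-t}^*f$ for $s\ge t$; the paper does the same computation at the level of $\ket{e(f)^\wedge}\bra{e(g)^\wedge}$. For the second formula you take a genuinely different route. The paper keeps $X\in B(H^\wedge)$ arbitrary: it pairs the right-hand side against $e(f)^\wedge,e(g)^\wedge$, reduces $\Braket{e(S_s^*f),\breve{\mathcal T}_{t-s}X^\vee e(S_s^*g)}$ to $\exp\left(\int_s^t\overline{f}g\right)\Braket{e(S_t^*f),X^\vee e(S_t^*g)}$ via the earlier proposition on $\breve{\mathcal T}_{*r}$, and lets the $s$-integral telescope to $\exp\left(\int_0^t\overline{f}g\right)-1$. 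You instead specialize to $X^\vee=\ket{e(f)}\bra{e(g)}$ and exploit the factorization $(\breve{\mathcal T}_t-\mathcal T_t)\ket{e(f)}\bra{e(g)}=(\mathrm{I}-\ket\Omega\bra\Omega)\otimes\ket{e(S_tf)}\bra{e(S_tg)}$; that computation does close (the past factor yields $e^{\int_0^t\overline{h_1}h_2}-1=\int_0^t e^{\int_s^t\overline{h_1}h_2}\,\overline{h_1(s)}h_2(s)\,ds$, matching the $\chi(ds)$ density), though you leave it as a heuristic about the ``last non-vacuum time.''

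The genuine gap is the extension from rank-one $X$ to all of $B(H^\wedge)$. The proposition is a statement about every bounded $X$, and the span of exponential rank-one operators is not norm dense in $B(H^\wedge)$; to pass to general $X$ you need ultraweak density of that span together with normality (ultraweak continuity in $X$) of both sides, in particular of $X\mapsto 0\oplus\int_0^t\breve{\mathcal T}_{t-s}X^\vee\otimes\chi(ds)$. The density result you invoke does not supply this: it concerns density of the vectors $J(g)$ in $H\otimes L^2(\mathbb R_+)$, which guarantees only that matrix elements against exponential vectors determine an operator — exactly the step the paper uses after its general-$X$ computation — and says nothing about approximating $X$ itself. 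Either add the normality argument or, more economically, keep $X$ arbitrary as the paper does. Separately, your ``equivalently via \eqref{equ}'' remark is circular in the paper's logical order: that $\breve{\mathcal T}$ and the measure \eqref{eq:Mdef} satisfy \eqref{equ} is a consequence of this proposition, not a fact available before it.
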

\begin{proof}
First statement follows from the equalities
\begin{align*}
    \mathcal T_{t}^\wedge\ket{e(f)^\wedge}\bra{e(g)^\wedge}=&\ket{e(S_tf)^\wedge}\bra{e(S_tg)^\wedge}\\
    =&\Ket{\Omega\oplus\left(  \int_0^{+\infty}e(S_\tau^*S_tf)\otimes \chi(d\tau) S_tf\right)}\Bra{\Omega\oplus\left(  \int_0^{+\infty}e(S_\tau^*S_tg)\otimes \chi(d\tau) S_tg\right)}\\
    =&\Ket{\Omega\oplus\left(  \int_t^{+\infty}e(S_tS_\tau^*f)\otimes \chi(d\tau) S_tf\right)}\Bra{\Omega\oplus\left(  \int_t^{+\infty}e(S_tS_\tau^*g)\otimes \chi(d\tau) S_tg\right)}\\
    =&\Big(1\oplus(I\otimes S_t)\Big)\ket{e(f)^\wedge}\bra{e(g)^\wedge}\Big(1\oplus(I\otimes S_t^*)\Big).
\end{align*}
In order to prove the second statement, compute the matrix element of the right side with respect to exponential vectors:
\begin{align*}
    &\Braket{e(f)^\wedge,\left(
    {\mathcal T}_{t}^\wedge X+0\oplus\left( \int_0^t\breve{\mathcal T}_{t-s}X^\vee\otimes \chi(ds)\right)\right)e(g)^\wedge}= \Braket{e(f), \mathcal T_t X^\vee e(g)}\\
    &+\int_0^t\Braket{e(S_s^*f), \breve{\mathcal T}_{t-s}X^\vee e(S_s^*g)}\overline{f(s)}g(s)\, ds\\
    &= \Braket{
    e(S_t^*f),X^\vee e(S_t^*g)
    }+\int_0^t\exp{\left(\int_s^t\overline {f(\tau)}g(\tau)\, d\tau\right)}\Braket{
    e(S_t^*f),X^\vee e(S_t^*g)
    }\overline{f(s)}g(s)\, ds\\
    &=\Braket{
    e(S_t^*f),X^\vee e(S_t^*g)
    }\left(1+\exp{\left(\int_0^t\overline {f(\tau)}g(\tau)\, d\tau\right)}-1\right) = \Braket{e(f),\breve{\mathcal T}_tX^\vee e(g)}.
\end{align*}
\end{proof}

Now define the operator-valued measure $\mathfrak M$ on $B(H)$ by the formula
\begin{equation}\label{eq:Mdef}
    \mathfrak M\big([a,b)\big)Y =  \left[0\oplus\Big( Y\otimes \chi[a,b)\Big)\right]^\vee,\quad Y\in B(H).
    \end{equation}
The isomorphic measure in $H^\wedge$ equals
\begin{equation*}
     \mathfrak M^{\wedge}\big([a,b)\big)Y =  0\oplus\Big( Y^{\vee}\otimes \chi[a,b)\Big),\quad Y\in B(H^\wedge).
\end{equation*}
\begin{proposition}
The preadjoint measure $\mathfrak M_*$ acts on exponential rank one operators as
\begin{equation*}
    \mathfrak M_*\big([a,b)\big) \ket{e(f)}\bra{e(g)}=\int_a^b \ket{e(S_t^*f)}\bra{e(S_t^*g)}\overline{g(t)} f(t)\, dt.
\end{equation*}
\end{proposition}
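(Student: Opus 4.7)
The plan is to use trace duality to reduce the statement to a matrix element of $\mathfrak{M}([a,b))X$ between exponential vectors, then compute this matrix element by passing through the isometric isomorphism $W$ to $H^{\wedge}$, where the measure has the simple form $\mathfrak{M}^\wedge([a,b))Y = 0 \oplus (Y^\vee \otimes \chi[a,b))$.

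First, for an arbitrary $X \in B(H)$, the definition of the preadjoint gives
\begin{equation*}
\Tr\bigl(X \cdot \mathfrak{M}_*([a,b)) \ket{e(f)}\bra{e(g)}\bigr) = \braket{e(g), \mathfrak{M}([a,b))X \, e(f)}.
\end{equation*}
I would then use the isometric isomorphism $W$ to rewrite the right-hand side in $H^{\wedge}$ as
\begin{equation*}
\bigl\langle We(g),\, [0 \oplus (X \otimes \chi[a,b))]\, We(f)\bigr\rangle,
\end{equation*}
since conjugation by $W$ turns $\mathfrak{M}([a,b))X$ into $0 \oplus (X \otimes \chi[a,b))$ by the definition~\eqref{eq:Mdef} of $\mathfrak{M}$.

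Next, I would substitute the explicit formula~\eqref{eq:Wdef} for $We(f)$ and $We(g)$. The operator $0 \oplus (X \otimes \chi[a,b))$ kills the vacuum summand and sends $\int_0^{+\infty} e(S_t^* f) \otimes \chi(dt) f$ to $\int_a^b X e(S_t^* f) \otimes \chi(dt) f$. The resulting inner product, by the same integral-pairing computation used to show that $W$ preserves inner products (where the two projection-valued measures $\chi(ds)$ and $\chi(dt)$ pair along the diagonal $s = t$ to produce Lebesgue measure), collapses to
\begin{equation*}
\int_a^b \braket{e(S_t^* g), X e(S_t^* f)} \, \overline{g(t)} f(t) \, dt = \Tr\Bigl(X \cdot \int_a^b \ket{e(S_t^*f)}\bra{e(S_t^*g)} \, \overline{g(t)} f(t) \, dt\Bigr).
\end{equation*}
Since $X \in B(H)$ was arbitrary, this identifies $\mathfrak{M}_*([a,b))\ket{e(f)}\bra{e(g)}$ with the claimed expression.

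The main obstacle is rigorously handling the vector-valued integral $\int e(S_t^*f) \otimes \chi(dt) f$ and the diagonal pairing of the two $\chi(dt)$'s. This is, however, precisely the manipulation already carried out in the preceding proposition that shows $W$ to be an isometry, and so it may be invoked rather than redone.
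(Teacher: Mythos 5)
Your proposal is correct and follows essentially the same route as the paper's own proof: both reduce the claim via trace duality to the matrix element $\braket{e(g),\mathfrak M([a,b))X\,e(f)}$, transport it through $W$ to $H^{\wedge}$ where the measure acts as $0\oplus(X\otimes\chi[a,b))$, and evaluate the resulting inner product by the same diagonal pairing of the two projection-valued measures used in the isometry computation. No gaps; the argument matches the paper step for step.
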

\begin{proof} Take $Y\in B(H)$. Then
\begin{align*}
    \Braket{e(g), \mathfrak M\big([a,b)\big)Y e(f) } =&\Braket{e(g)^\wedge, \left[\mathfrak M\big([a,b)\big)Y\right]^{\wedge} e(f)^\wedge}\\
    =&\Braket{\int_0^{+\infty}e(S_t^*g)\otimes \chi(dt) g,\Big(
    Y
    \otimes \chi[a,b)\Big)
    \int_0^{+\infty}e(S_t^*f)\otimes \chi(dt) f} \\
    =&\Braket{\int_0^{+\infty}e(S_t^*g)\otimes \chi(dt) g,
    \int_a^bYe(S_t^*f)\otimes \chi(dt) f}\\
    =& \int_a^b\Braket{e(S_t^*g),Ye(S_t^*f)}\overline{g(t)}f(t)\, dt\\ 
    =& \Tr Y\int_a^b \ket{e(S_t^*f)}\bra{e(S_t^*g)}\overline{g(t)} f(t)\, dt.
\end{align*}
\end{proof}

\begin{proposition}
$\mathfrak M $ satisfies the covariant property with respect to $\mathcal T_t$.
\end{proposition}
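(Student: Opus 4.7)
The plan is to pass to the isomorphic picture on $H^\wedge$, where both $\mathcal{T}_t$ and $\mathfrak{M}$ have explicit and essentially block-diagonal forms. By the previous proposition, $\mathcal{T}_t^\wedge X = \bigl(1\oplus(I\otimes S_t)\bigr)X\bigl(1\oplus(I\otimes S_t^*)\bigr)$, and by (\ref{eq:Mdef}) the isomorphic measure acts as $\mathfrak{M}^\wedge([a,b))Y = 0\oplus(Y^\vee\otimes\chi[a,b))$. Since $W$ is an isometric isomorphism, the covariance identity $\mathcal{T}_t\mathfrak{M}(B) = \mathfrak{M}(B+t)$ is equivalent to its $\wedge$-counterpart $\mathcal{T}_t^\wedge\mathfrak{M}^\wedge(B) = \mathfrak{M}^\wedge(B+t)$, and it suffices to verify it on a half-open interval $B=[a,b)$ and then extend by finite additivity.

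Next I would plug the two formulas together and compute
\begin{align*}
\mathcal{T}_t^\wedge\bigl(\mathfrak{M}^\wedge([a,b))Y\bigr)
&= \bigl(1\oplus(I\otimes S_t)\bigr)\bigl[0\oplus(Y^\vee\otimes\chi[a,b))\bigr]\bigl(1\oplus(I\otimes S_t^*)\bigr)\\
&= 0\oplus\bigl(Y^\vee\otimes S_t\chi[a,b)S_t^*\bigr).
\end{align*}
The zero component in the first summand is preserved automatically, so the whole question reduces to the scalar-operator identity $S_t\chi[a,b)S_t^* = \chi[a+t,b+t)$ in $B(L^2(\mathbb{R}_+))$.

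The last identity is the only real computation. It follows from the pointwise action: for $\phi\in L^2(\mathbb{R}_+)$ and $x>t$ we have $(S_t\chi[a,b)S_t^*\phi)(x) = \chi[a,b)(x-t)\phi(x) = \chi[a+t,b+t)(x)\phi(x)$, while for $x\le t$ the right shift $S_t$ kills the value. Since $a\ge0$ implies $a+t\ge t$, the function $\chi[a+t,b+t)$ is a.e. supported on $\{x>t\}$, so the two operators agree as multiplications on $L^2(\mathbb{R}_+)$. Substituting back gives $\mathcal{T}_t^\wedge\mathfrak{M}^\wedge([a,b))Y = 0\oplus(Y^\vee\otimes\chi[a+t,b+t)) = \mathfrak{M}^\wedge([a+t,b+t))Y$, which is exactly the desired covariance.

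I do not expect any serious obstacle: the measure $\mathfrak{M}$ was tailor-made on the $\wedge$-side to factor through multiplication by the characteristic function of $B$, and $\mathcal{T}_t$ was simultaneously identified with conjugation by $I\otimes S_t$, so covariance is forced by $S_t\chi(B)S_t^* = \chi(B+t)$. The only minor subtlety worth a sentence is the measure-zero issue at $x=t$ when $a=0$, which is why I prefer to state the identity as an equality of operators on $L^2(\mathbb{R}_+)$ rather than pointwise.
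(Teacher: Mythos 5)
Your proof is correct and follows essentially the same route as the paper: pass to the $\wedge$-picture, use $\mathcal{T}_t^\wedge X=\bigl(1\oplus(I\otimes S_t)\bigr)X\bigl(1\oplus(I\otimes S_t^*)\bigr)$ together with $\mathfrak{M}^\wedge([a,b))Y=0\oplus(Y^\vee\otimes\chi[a,b))$, and reduce everything to $S_t\chi[a,b)S_t^*=\chi[a+t,b+t)$. The paper states this last identity without comment, whereas you verify it pointwise and note the measure-zero subtlety at $x=t$ when $a=0$; this is a welcome but inessential elaboration.
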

\begin{proof}
Indeed,
\begin{align*}
    \mathcal T_t^\wedge \mathfrak M^\wedge \big([a,b)\big)Y = 0\oplus (Y^\vee\otimes S_t\chi[a,b)S_t^* ) =0\oplus (Y^\vee\otimes\chi[a+t,b+t)) \\
    =\mathfrak M^\wedge \big([a+t,b+t)\big)Y,\quad Y\in B(H^\wedge).
\end{align*}
\end{proof}
\begin{proposition}
    The measure $\mathfrak M$ and the semigroup $\breve{\mathcal T}_t$ satisfy the equation~\eqref{equ} up to isomorphism $W$.
\end{proposition}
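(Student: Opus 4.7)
The plan is to verify the integral equation~\eqref{equ} by moving everything through the isomorphism $W$ and exploiting the already-established decomposition of $\breve{\mathcal T}_t^\wedge$. The starting point is the formula proved in the previous proposition,
\begin{equation*}
\breve{\mathcal T}_t^\wedge X \;=\; \mathcal T_t^\wedge X \;+\; 0\oplus\left(\int_0^t \breve{\mathcal T}_{t-s}X^\vee \otimes \chi(ds)\right),\qquad X\in B(H^\wedge),
\end{equation*}
which already isolates the ``extra'' contribution beyond $\mathcal T_t^\wedge$. The goal is to recognise that extra contribution as $\int_0^t \mathfrak M^\wedge(ds)\,\breve{\mathcal T}_{t-s}^\wedge X$.

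First, I would apply the definition~\eqref{eq:Mdef} of $\mathfrak M^\wedge$ to the operator $\breve{\mathcal T}_{t-s}^\wedge X$: since $W$ is an isometric isomorphism we have $(\breve{\mathcal T}_{t-s}^\wedge X)^\vee = \breve{\mathcal T}_{t-s} X^\vee$, so
\begin{equation*}
\mathfrak M^\wedge(ds)\,\breve{\mathcal T}_{t-s}^\wedge X \;=\; 0 \oplus \bigl(\breve{\mathcal T}_{t-s} X^\vee \otimes \chi(ds)\bigr).
\end{equation*}
Integrating in $s$ from $0$ to $t$ reproduces exactly the second summand in the decomposition above. Hence
\begin{equation*}
\breve{\mathcal T}_t^\wedge X \;-\; \int_0^t \mathfrak M^\wedge(ds)\,\breve{\mathcal T}_{t-s}^\wedge X \;=\; \mathcal T_t^\wedge X,
\end{equation*}
which is equation~\eqref{equ} written in the $H^\wedge$ picture. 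Applying $(\cdot)^\vee$ and using that $W$ intertwines $\mathcal T_t$ with $\mathcal T_t^\wedge$, $\breve{\mathcal T}_t$ with $\breve{\mathcal T}_t^\wedge$, and $\mathfrak M$ with $\mathfrak M^\wedge$, yields equation~\eqref{equ} up to the isomorphism $W$, as claimed.

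The only delicate point, which I would address in passing, is that the integral $\int_0^t \mathfrak M^\wedge(ds)\,\breve{\mathcal T}_{t-s}^\wedge X$ is to be understood in the weak operator sense against exponential vectors; this is harmless because exponential vectors are total and all manipulations above are linear and compatible with the weak matrix-element computation already used in the previous propositions. In other words, there is no new analytic estimate to make: the statement is a direct algebraic consequence of the formula for $\breve{\mathcal T}_t^\wedge$ combined with the definition of $\mathfrak M^\wedge$, and the only ``content'' of the proof is recognising that the $0\oplus(\,\cdot\otimes\chi(ds))$ factor appearing in the decomposition of $\breve{\mathcal T}_t^\wedge$ is exactly the action of $\mathfrak M^\wedge(ds)$.
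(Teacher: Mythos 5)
Your argument coincides with the paper's own proof: both substitute the decomposition $\breve{\mathcal T}_t^\wedge X = \mathcal T_t^\wedge X + 0\oplus\bigl(\int_0^t \breve{\mathcal T}_{t-s}X^\vee\otimes\chi(ds)\bigr)$ from the preceding proposition, identify the second summand with $\int_0^t\mathfrak M^\wedge(ds)\,\breve{\mathcal T}_{t-s}^\wedge X$ via the definition of $\mathfrak M^\wedge$ and the relation $(\breve{\mathcal T}_{t-s}^\wedge X)^\vee=\breve{\mathcal T}_{t-s}X^\vee$, and cancel. The proposal is correct and essentially identical to the paper's proof, with only the additional (harmless) remark about interpreting the integral weakly against exponential vectors.
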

\begin{proof}
Indeed,
\begin{align*}
    \breve{\mathcal T}_t^\wedge X - \int_0^t\mathfrak M^\wedge(ds)\breve{\mathcal T}_{t-s}^\wedge X = 
    {\mathcal T}_{t}^\wedge X+0\oplus\left( \int_0^t\breve{\mathcal T}_{t-s}X^\vee\otimes \chi(ds)\right)- 0\oplus \left(\int_0^t \breve{\mathcal T}_{t-s}X^\vee\otimes \chi(ds)\right) \\
    = \mathcal T^\wedge X, \quad X\in B(H^\wedge) .
\end{align*}
\end{proof}

% We do not know the direct way to describe $\mathfrak M$ and integration technique  over it in the initial space $H$.

\section{Conclusion}\label{Sec:Conclusions}
In this work, we have studied the problem of restoring states of quantum systems from the dynamics induced by quantum dynamical semigroups perturbed by covariant measures. We introduce definitions of an (unbounded) covariant operator-valued measure on the half-axis and divide the set of all such measures into two subclasses. The first subclass consists of measures having (at least unbounded) density with respect to the Lebesgue measure on the half-axis, and the second subclass consists of measures singular with respect to the Lebesgue measure. Examples of measures belonging to both subclasses are given. Next we consider perturbations of no-event quantum dynamical semigroups by means of measures. It is shown that the Lindbladian form of the generator fits into the concept of perturbation, but the class under consideration contains a broader class of semigroups. In the last section of the paper, an operator-valued measure is constructed in the symmetric Fock space, covariant with respect to the semiflow of shifts. Various physically meaningful examples are provided, including that in the single particle case corresponds to the Koopman-von Neumann dynamics of a classical oscillator with bounded phase space~\cite{McCaulPRER2019} and the model of the perturbed dynamics in the symmetric (boson) Fock space. This model, which as an example describes quantum optics processes like photon transmittance along fibers, is explicitly constructed as covariant measure for a semiflow of shifts and its perturbation in the symmetric Fock space, and its properties are investigated. 

\section*{Acknowledgment} This work was funded by the Ministry of Science and Higher Education of the Russian Federation (grant number 075-15-2020-788) and performed at the Steklov Mathematical Institute of the Russian Academy of Sciences.

\section*{Availability of data and materials} Data sharing is not applicable to this article as no datasets were generated or analyzed during the current study.

\section*{Conflict of interests}

The author declares no conflicts of interest.

\end{document}